\newcommand{\E}{\mathbb{E}}
\newtheorem{mechanism}{Mechanism}
\def\BState{\State\hskip-\ALG@thistlm}
\def\algbackskip{\hskip-\ALG@thistlm}
\begin{document}
\title{Cost Sharing in Two-Sided Markets}
%
%
\author{Sreenivas Gollapudi\inst{1} \and
Kostas Kollias \inst{1} \and
Ali Shameli\inst{2}}
\authorrunning{Gollapudi et al.}
%
\institute{Google Research \and
Massachusetts Institute of Technology
}
\maketitle              
\begin{abstract}
Motivated by the emergence of popular service-based two-sided markets where sellers can serve multiple buyers at the same time, we formulate and study the {\em two-sided cost sharing} problem. In two-sided cost sharing, sellers incur different costs for serving different subsets of buyers and buyers have different values for being served by different sellers. Both buyers and sellers are self-interested agents whose values and costs are private information. We study the problem from the perspective of an intermediary platform that matches buyers to sellers and assigns prices and wages in an effort to maximize gains from trade (i.e., buyer values minus seller costs) subject to budget-balance in an incentive compatible manner. In our markets of interest, agents trade the (often same) services multiple times. Moreover, the value and cost for the same service differs based on the context (e.g., location, urgency, weather conditions, etc). In this framework, we design mechanisms that are efficient, ex-ante budget-balanced, ex-ante individually rational, dominant strategy incentive compatible, and ex-ante in the core (a natural generalization of the core that we define here).

\keywords{Cost Sharing  \and Mechanism Design \and Two-sided Markets.}
\end{abstract}
\vspace{-0.2in}
\section{Introduction}
\label{intro}

The recent emergence of sharing economy has brought renewed interest in the scientific community on studying two-sided markets where services are traded. One example of such markets are ride-sharing services like Uber and Lyft where one side of the market, i.e. drivers, provide a service to the other side of the market, namely riders. An important characteristic of such markets is the ability of a seller to offer service to {\em multiple} buyers at the same time.  For example, Uber Pool and Lyft Line typically assign a driver to multiple riders at the same time; as long as the number of riders does not exceed the capacity of the car.  This is in contrast to the one-to-one assignment that happens in other popular two-sided markets such as Amazon and Ebay. Central to the design of the above markets are the problems of price and wage computation as well as assignment of buyers to sellers.

Consider a simpler one sided case where we have multiple buyers and one service provider. In such a case, a service provider incurs a cost $c(S)$ for serving a subset $S$ of its customers. In the case of ride-sharing, $c(S)$ is the cost incurred by a cab driver to serve the riders in $S$. Each rider $i$ values the ride $v_i$ which is known only to $i$. In this case, the utility derived by the rider is $v_i - p_i$ where $p_i$ is the price charged to the rider for the ride. Depending on the pricing mechanism chosen by the ride-sharing platform, a rider might have an incentive to misreport her value to derive higher utility. The solution to this problem involves solving a {\em cost sharing} problem~\cite{DMRS18,DO17,GKL76,GKLRS07,MRS09,M99,MS01,RS09}. A cost sharing mechanism first asks each buyer to report their value for being served and then decides the assignment as well as the price each user pays on the buyer side in a way that the cost of the seller is covered by the payments of the buyers.

The reader may note that in the above {\em one-sided} setting, only the values of the users are private while the cost function $c(S)$ of the providers is known to the platform. In this study, we propose and study the {\em two-sided cost sharing} problem that generalizes the one-sided setting to the case where the costs are also private information to the sellers and the platform procures their services by offering wages. One challenge for such settings is designing a mechanism that can actually extract the true values and cost functions of buyers and sellers respectively.

In designing our mechanism, there are various objectives that we aim to achieve. A two-sided cost-sharing mechanism is efficient if it maximizes the sum of valuations of all buyers in the assignment minus the cost incurred by the sellers (which is equivalently called the {\em gains from trade}, a popular objective in the literature for designing mechanisms for two sided markets); It is dominant-strategy incentive compatible (DSIC) if for every buyer and seller, revealing their true value and cost respectively is a dominant strategy; it is weakly budget-balanced (BB) if, in the assignment, the price realized from all buyers is at least as large as the wages paid to all the sellers; it is individually rational (IR) if no agent incurs a loss participating in the mechanism; finally, a solution of the mechanism (which consists of an assignment and vectors of wages and prices) is in the {\em core} if the utilities of the agents are such that no subset of them can form a coalition and produce welfare higher than their collective utility in the proposed solution.

Two salient features of services in the sharing economy are - a) an agent participates many times in the market and b) the agent types tend to be dependent on environmental and circumstantial parameters (such as the current location, traffic volume in the surrounding area, weather conditions, urgency, etc.) and are not intrinsic to the agents. Therefore, our work focuses on designing two-sided cost sharing mechanisms that will satisfy the properties that pertain to agent utilities, namely IR and the core, in expectation. To be more precise, our mechanisms are efficient, dominant strategy IC (DSIC), ex-ante IR, ex-ante weakly BB, and ex-ante in the core. 

We note that, on top of being suitable for our applications of interest, these properties are also tight from a technical perspective: Efficiency and IC are satisfied as their strongest possible versions and weakly BB is a platform constraint that we satisfy. Strenghtening ex-ante IR is not possible even when relaxing IC to Bayesian IC (as given by the Myerson-Satterthwaite impossibility theorem \cite{MS83} for the single buyer-single seller case) or even when relaxing efficiency (gains from trade) to approximate efficiency (as shown in \cite{BD16,BM16,CGKLT17}, again for a single buyer and a single seller). Moreover, it is conjectured to be impossible even when relaxing both IC and efficiency, as supported by partial impossibility results and experimental evidence \cite{BM16}.

\vspace{-0.1 in}
\subsection{Results and Techniques}
\vspace{-0.25in}
\begin{table}
\begin{center}
\begin{tabular}{|c|c|}
\hline
\textbf{Cost function} & \textbf{Result}\\
\hline
1 submodular seller  & Optimum welfare and core\\
\hline
Capacitated NGS sellers  & Optimum welfare and core\\
\hline
General sellers with constant capacity constraints  & approx welfare and core\\
\hline
Super additive sellers  & approx welfare and core\\
\hline
\end{tabular}
\caption{Summary of our results.\label{tab}}
\end{center}
\end{table}

\vspace{-0.3in}
As we explained above, our main contributions are mechanisms that are efficient, DSIC, ex-ante weakly BB, ex-ante IR, and ex-ante in the core. In Section \ref{sec-mech} we study classes of cost functions that allow us to design an efficient mechanism, i.e. a mechanism that maximizes gains from trade. Subsequently in Section \ref{sec-approx} we study general cost functions for which we devise an approximately efficient mechanism. The cases we study, are characterized by the cost functions of the sellers. We study 4 different scenarios. First the case where we only have one single seller, second, when we have multiple sellers with negative gross substitutes cost functions, third, when we have multiple sellers with general cost functions and constant capacity constraints, and lastly, when we have multiple sellers with superadditive cost functions.

Given that our setting is multi-dimensional, it is known that the design space for truthful mechanisms is strongly restricted and the main tool in our disposal is the family of VCG-type mechanisms \cite{C71,G73,V61}. Our first technical contribution is designing a VCG-like mechanism that guarantees the DSIC property as well as the induced outcome being ex-ante in the core. To do so, we first give an algorithm that computes utilities in the core for the case with known values and costs, by means of a primal-dual pair of LPs. This result is of independent interest in itself as it generalizes a result of \cite{BHIM10} to our various models. We then show that the utilities for different realizations can be combined point-wise to yield weakly BB wages and payments that are ex-ante in the core.

A second technical contribution is the proper use of sampling to achieve our properties of interest with high probability in polynomial time in certain sub-models. We note that, in this sampling scenario, it is trickier to guarantee that the expected total utility (over sampled points) of a group of agents matches the expected welfare (over all points) that they could generate. However, with appropriate parameter selection and arguments, we show that we can approximate to arbitrary precision the utility {\em per agent}, which then yields the required properties.

Finally, for the case of multiple agents and general cost functions with constant capacity, since we can no longer achieve an assignment that maximizes gains from trade, it is more challenging to attain the DSIC property. However, we design a rounding scheme that rounds the optimal fractional solution of the LP corresponding to gains from trade, into an integral solution that achieves exactly a fixed fraction of the LP objective. This allows us to maintain the DSIC property for our mechanism. Subsequently, for the case of uncapacitated super additive cost functions, we present a transformation of the game to fit the framework of \cite{LS05} and get a truthful mechanism, even in the absence of a welfare-maximizing algorithm. The framework of \cite{LS05} is one-sided and requires utility functions that, among other properties, are monotone. Our setting is two-sided and the utility generated by each seller is not necessarily monotone in the set of buyers. However, we show how to get past these issues and design a mechanism that works for our model.

\vspace{-0.1 in}
\subsection{Related Work}

Our work is related to two areas of literature: two-sided markets and (one-sided) cost sharing. In two-sided markets, a series of papers studies two-sided auctions that approximate efficiency with respect to the sum of values of the items' holders after trade, as opposed to the gains from trade version that we study here \cite{BD16,CGKLRT17,CKLT16,DTR17,M92}. With respect to gains from trade, approximating the optimal gains from trade is impossible in many settings \cite{BD16,BM16,CGKLT17}. Nevertheless, there do exist results that impose assumptions on the distributions and/or approximate a weaker benchmark: the gains from trade achieved by the ``second-best'' mechanism of Myerson and Satterthwaite \cite{MS83}, which is known to be the one that maximizes gains from trade subject to interim IR and BIC \cite{BM16,BCWZ17,CGKLT17,M08}. All these results rely heavily on the fact that the studied settings are single-dimensional and break down even in minor departures from single-dimensionality. Approximating gains from trade even in simple multi-dimensional settings seems like a challenging problem. In a slightly different setting, other works \cite{ABGKM17,BGKM17} approximate gains from trade with non-atomic populations of agents.

On the cost sharing side of the literature, early work on cooperative games and the core can be found in \cite{B63,G59,Sc67,Sh67}. Related to our results on computing outcomes in the core for known agent types is the work by Bateni et al. \cite{BHIM10} who show how to compute a solution in the core of a game where suppliers deal with manufacturers and all information is public. With a simple transformation, one can show that the model in \cite{BHIM10} is equivalent to our model with additive cost functions. In this sense, our results on computing solutions in the core for known agent types generalize the corresponding result of \cite{BHIM10} to broader classes of cost functions. Considering mechanisms for one-sided cost sharing, most works focus on the version of the problem where there are no prior distributions on the values of the buyers.
We remark that, in the absence of prior knowledge, the two-sided setting is hopeless in trading-off efficiency and budget balance. A simple example with a buyer with a large value $v$ and a seller with $0$ cost is enough to see that. The only price and wage that make this setting efficient and truthful are a price of $0$ for the buyer and a wage of $v$ for the seller. The only exception to this rule, is the work by Fu et al. \cite{FLSS13}, who consider a Bayesian setting and show that any approximation algorithm for the underlying problem can be transformed into a mechanism for the cost sharing problem with a logarithmic loss in efficiency.

Finally, we note that our mechanisms are related to the AGV mechanism \cite{AG79} and the mechanisms in \cite{C11}, which also achieve ex-ante IR guarantees. The AGV mechanism is efficient, BB, Bayesian IC, and ex-ante IR under certain conditions (such as no costs) which are very different from our setting. The work in \cite{C11} focuses on auction settings (including a two-sided auction with a single seller and a single item) and designs mechanisms that are efficient, BB, DSIC, and ex-ante IR. Our mechanisms focus on the richer two-sided cost-sharing setting, in which we provide solutions in the core (an important consideration in our applications of interest as, otherwise, agents are incentivized to deal outside of the market), and also address computational considerations (as they run in polynomial time for various settings).

\vspace{-0.1 in}
\section{Preliminaries}\label{model}

Our market model is comprised of a set of $m\ge 1$ sellers $M$ and a set of $n\ge 1$ buyers $N$. Each buyer $i\in N$ is unit demand and has value $v_{ij}$ for being served by seller $j\in M$. Each seller $j\in M$ is endowed with a cost function $c_j(S)$ which gives the cost of the seller to serve the buyers in $S\subseteq N$. We assume that $c_j(\emptyset)=0$ for all $j\in M$. Optionally, the model can impose a capacity constraint on the sellers with each seller $j$ being able to accept $k_j$ buyers. We make the natural assumption that values are bounded and further, without loss of generality and for simplicity of exposition, that they are in $[0,1]$. Buyers and sellers interact with an intermediary platform that determines the assignment of buyers to sellers as well as prices $p_i$ for the buyers $i\in N$ and wages $w_j$ for the sellers $j\in M$. The utility of a buyer $i$ that is matched to a seller $j$ is $u_i=v_{ij}-p_i$. The utility of a seller who is assigned buyers $S$ is $u_j=w_j-c_j(S)$.

We assume the existence of discrete prior distributions over the types of each buyer and seller. The type of a seller specifies her cost function whereas the type of a buyer specifies her values. We assume agent types are drawn independently and that the prior distributions are common knowledge. Note that our discreteness assumption is a) natural, since these are distributions over possible payments which are by definition discrete, and b) without a major impact on the model since any continuous distribution can be replaced by a discrete version with an arbitrarily small approximation to the results. The solution that the platform needs to come up with is specified as an assignment of buyers to sellers, a price vector for the buyers, and a wage vector for the sellers. Throughout the paper we describe the assignment both as a collection of buyer subsets $S_1, S_2, \ldots, S_m$, with $S_j$ being the set assigned to seller $j\in M$, and as a mapping function $\sigma(\cdot)$, with $\sigma(i)$ being the seller to which buyer $i\in N$ is assigned. Generally, our goal is to maximize gains from trade, i.e., the total value of matched buyers minus the total cost of sellers. Sometimes we refer to this objective as the social welfare.

We next describe the mechanism properties that appear in our results.

\vspace{-0.15 in}
\paragraph{\textbf{Efficiency}}
A mechanism is {\em efficient} if it maximizes the gain from trade, i.e., the total value of matched buyers minus the total cost of sellers. A mechanism is {\em $\alpha$-efficient} if it achieves an $\alpha$ approximation to the optimal gains from trade.

\vspace{-0.15 in}
\paragraph{\textbf{Weak budget balance (BB)}}
A mechanism is {\em ex-ante weakly BB} if the expected sum of prices extracted from the buyers is at least equal to the expected sum of wages paid to the sellers.

\vspace{-0.15 in}
\paragraph{\textbf{Individual rationality (IR)}}
A mechanism is {\em ex-ante IR} if every agent has non-negative expected utility before all types are drawn. We say a mechanism is $\epsilon$ ex-ante IR if the agent expected utilities are at least $-\epsilon$.

\vspace{-0.15 in}
\paragraph{\textbf{Incentive compatibility}}
A solution is {\em dominant strategy incentive compatible} (DSIC) if an agent cannot improve her utility by misreporting her type, even after learning the types of other agents.

\vspace{\baselineskip}
\vspace{-0.15 in}
In addition to these standard properties, we are interested in obtaining solutions that are {\em in the core} of the cost-sharing game, something that would encourage agents to adhere to the platform's solution and stay in its market.

\vspace{-0.15 in}
\paragraph{\textbf{Cost-sharing core}}
A solution is {\em in the core} if the sum of buyer and seller utilities equals the welfare they produce and there does not exist a coalition of buyers and sellers who can generate welfare higher than the sum of their utilities. More generally, the {\em $\alpha$-core} requests that every set of agents can't produce welfare higher than $\alpha$ times their total utilities. The $\alpha$-core property guarantees that agents on platform achieve in expectation, at least $\alpha$ fraction of the maximum utility they can potentially gain in the market by forming a private coalition.

\vspace{\baselineskip}
\vspace{-0.15 in}
Before moving forward with presenting our results, we define the classes of {\em submodular} and {\em negative gross substitutes} seller cost functions which we will use as part of our results. For the latter, we begin with the standard gross substitutes definition and then present the negative gross substitutes definition and how it is placed in our framework. This class of functions is interesting to study since it represents the theoretical border of tractability for obtaining the optimal solution for this problem.

\vspace{-0.15 in}
\paragraph{\textbf{Submodular cost functions}}
A cost function $c$ is {\em submodular} if for every subsets of buyers $S,S'\in 2^N$ it is the case that $c(S)+c(S')\ge c(S\cup S')+c(S\cap S')$.


\vspace{-0.15 in}
\paragraph{\textbf{Gross substitutes functions}}
A function $f$ defined over the set of buyers $N$ satisfies the {\em gross substitutes} condition if and only if the following holds. Let $p$ be a vector of prices charged to the buyers and let $D(p)=\arg\max_{S\subseteq N}\{f(S)-\sum_{i\in S}p_i\}$ be the demand set. Then, for every price vector $p$, every $S\in D(p)$, and every $q\ge p$, there exists a set $T\subseteq N$ such that $(S\setminus A)\cup T\in D(q)$, where $A=\{i\in N:~q_i\ge p_i\}$ is the set of items for which the prices increase from $p$ to $q$.

\vspace{-0.15 in}
\paragraph{\textbf{Negative gross substitutes cost functions in two-sided cost-sharing}}
A seller cost function, which maps each subset of buyers $S\in 2^N$ to the real cost $c(S)$, satisfies the {\em negative gross substitutes} condition if and only if the following holds. Let $p$ be a vector of prices charged to the buyers and let $D(p)=\arg\max_{S\subseteq N}c(S)-\sum_{i\in S}p_i$. Then, for every price vector $p$, every $S\in D(p)$, and every $q\le p$, there exists a set $T\subseteq N$ such that $(S\setminus A)\cup T\in D(q)$, where $A=\{i\in N:~q_i\le p_i\}$.

\vspace{-0.15 in}
\paragraph{\textbf{Super additive cost functions}}
A cost function $c$ is {\em super additive} if for every pair of disjoint subsets of buyers $S,S'\in 2^N$ we have  $c(S)+c(S')\leq c(S\cup S')$.

\vspace{\baselineskip}

The class of functions satisfying the gross substitutes property contains, for example, all additive and unit-demand functions and is contained in the class of submodular functions. To abbreviate, we will say that a function is (negative) GS if it satisfies the (negative) gross substitutes condition.

In Section \ref{sec-mech}, we show how to design an efficient mechanism for special cases such as when we have one seller with a submodular cost function or the case where we have multiple sellers with negative gross substitutes cost functions. Note that, in general, this problem is very difficult, and as proved in Proposition \ref{hardness}, even for the case where the cost functions are constant over non-empty subsets, it is still NP-hard to design an efficient mechanism. That is why, in Section \ref{sec-approx}, we study approximately efficient mechanisms for more general cost functions.

\begin{proposition}\label{hardness}
	Given multiple sellers with cost functions that are constant over non-empty subsets, it is NP-hard to find an assignment that maximizes gains from trade.
\end{proposition}

We provide the proof of this Proposition in the appendix.

\vspace{-0.2 in}
\section{Efficient Mechanism}
\label{sec-mech}

In this section we describe the main mechanism for most of the settings we study. The mechanism requires access to two algorithms: a) an algorithm to compute a welfare-maximizing assignment of buyers to sellers with known values and costs and b) a deterministic algorithm to compute non-negative utilities that are in the (approximate) core of the cost-sharing game and sum up to the optimal welfare, again, with known values and costs. We discuss these algorithms further in Section \ref{sec-algs} and in the sections that correspond to the different models we study. For the polynomial time version of our mechanism, both of these algorithms must run in polynomial time. Let \textsc{welfare-alg} be the welfare maximizing algorithm and let \textsc{core-alg} be the algorithm that computes utilities in the $\alpha$-core. The exact value of $\alpha$ will depend on the exact model, i.e., on the number of sellers and the class of cost functions under consideration.

\begin{mechanism}\label{expo-time-mech}
Is defined as follows:
\begin{itemize}
\item Allocation Rule: Given the reported values $v_{ij}$ for $i\in N,j\in M$, and cost functions $c_j:2^N\to \mathbb{R}$ for $j\in M$, output the welfare-maximizing allocation computed by \textsc{welfare-alg}.
\item Pre-processing: For every realization of agent types $r$ that has some probability $q_r$, compute buyer utilities $y_i^r,i\in N,$ and seller utilities $z_j^r,j\in M$, that are non-negative, in the $\alpha$-core, and sum up to the optimal welfare using \textsc{core-alg}. Let $y_i=\sum_{r}q_r y_i^r$, be the expected utility of buyer $i$ over all realizations and let $z_j=\sum_{r}q_r z_j^r$, be the expected utility of seller $j$ over all realizations.
\item Buyer prices: The price charged to buyer $i$ is
\[ p_i=\sum_{j\in M}c_j(S_j)-\sum_{i'\in N,i'\neq i}v_{i'\sigma(i')}+\sum_{i'\in N,i'\neq i}y_{i'}+\sum_{j\in M}z_j, \]
where $S_j$ is the set of buyers assigned to seller $j$ and $\sigma(i')$ is the seller that $i'$ is assigned to.
\item Seller wages: The wage paid to seller $j$ is
\[ w_j=\sum_{i\in N}v_{i\sigma(i)}-\sum_{j'\in M,j'\neq j}c_{j'}(S_{j'})-\sum_{i\in N}y_i-\sum_{j'\in M,j'\neq j}z_{j'}, \]
where $S_j$ is the set of buyers assigned to seller $j$ and $\sigma(i)$ is the seller that $i$ is assigned to.
\end{itemize}
\end{mechanism}

\begin{theorem}\label{expo-time-thm}
Mechanism \ref{expo-time-mech} is efficient, ex-ante weakly BB, DSIC, ex-ante IR, and ex-ante in the $\alpha$-core.
\end{theorem}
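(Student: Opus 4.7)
The plan is to verify each of the five properties in turn, exploiting the VCG-style structure and the fact that the auxiliary utilities $y_i^r,z_j^r$ sum to the optimal welfare in every realization.

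Efficiency is immediate from the allocation rule, since \textsc{welfare-alg} is, by assumption, welfare-maximizing on the reported types. For DSIC, I would rewrite the buyer utility as
\[
v_{i\sigma(i)}-p_i \;=\; \Bigl(\sum_{i'\in N}v_{i'\sigma(i')}-\sum_{j\in M}c_j(S_j)\Bigr)\;-\;\Bigl(\sum_{i'\neq i}y_{i'}+\sum_{j\in M}z_j\Bigr),
\]
and observe that the bracketed second term is a constant independent of buyer $i$'s report (the $y$'s and $z$'s are ex-ante quantities computed from the prior only). Hence, following the standard VCG argument, buyer $i$'s true utility equals the true welfare of the chosen allocation minus a constant, so reporting truthfully is a dominant strategy; the symmetric computation works for seller wages.

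For ex-ante IR, I would use the key identity $\sum_{i}y_i^r+\sum_{j}z_j^r=\mathrm{OPT}(r)$ from \textsc{core-alg}, which gives $\sum_i y_i+\sum_j z_j=\mathbb{E}[\mathrm{OPT}]$. Under truthful reporting the computation above yields $\mathbb{E}[u_i]=\mathbb{E}[\mathrm{OPT}]-\sum_{i'\neq i}y_{i'}-\sum_j z_j=y_i\ge 0$, and analogously $\mathbb{E}[u_j]=z_j\ge 0$. Ex-ante BB then follows by summing: $\sum_i\mathbb{E}[p_i]-\sum_j\mathbb{E}[w_j]=\sum_i\mathbb{E}[v_{i\sigma(i)}]-\sum_j\mathbb{E}[c_j(S_j)]-\sum_i y_i-\sum_j z_j=\mathbb{E}[\mathrm{OPT}]-\mathbb{E}[\mathrm{OPT}]=0$.

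Finally, for the ex-ante $\alpha$-core property, I would fix an arbitrary coalition $C\subseteq N\cup M$ and let $\mathrm{OPT}_C(r)$ denote the best welfare $C$ can produce on its own under realization $r$. Since \textsc{core-alg} outputs $(y^r,z^r)$ in the $\alpha$-core for every $r$, we have $\mathrm{OPT}_C(r)\le \alpha\bigl(\sum_{i\in C\cap N}y_i^r+\sum_{j\in C\cap M}z_j^r\bigr)$ pointwise; taking expectations and using linearity, $\mathbb{E}[\mathrm{OPT}_C]\le\alpha\bigl(\sum_{i\in C\cap N}y_i+\sum_{j\in C\cap M}z_j\bigr)$, which is exactly the ex-ante $\alpha$-core condition. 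The only mildly subtle step is being careful that $y_i,z_j$ are ex-ante constants (so DSIC really is unaffected by them), but beyond that the proof is essentially a bookkeeping exercise leveraging \textsc{core-alg}'s per-realization guarantee.
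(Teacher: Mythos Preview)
Your proposal is correct and follows essentially the same approach as the paper: both arguments rewrite each agent's utility as realized welfare minus an ex-ante constant to get DSIC, derive $\E[u_i]=y_i$ and $\E[u_j]=z_j$ for IR and BB, and then average the per-realization $\alpha$-core inequalities of \textsc{core-alg} to obtain the ex-ante $\alpha$-core guarantee. The only cosmetic difference is that the paper phrases BB via ``expected total utility equals expected welfare'' while you compute $\sum_i\E[p_i]-\sum_j\E[w_j]$ directly, but these are the same computation.
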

\begin{proof}
We first show the mechanism is DSIC. We can then assume that the agents reveal their true types when discussing the remaining properties.  Let $\hat{v}_{ij}$ for $i\in N,j\in M$, and $\hat{c_j}(\cdot)$ for $j\in M$, be the reported values and cost functions. The utility of buyer $i$ who is matched to some seller $\sigma(i)$ will be
\[ u_i=\hat{v}_{i\sigma(i)}-p_i=\sum_{i'\in N}\hat{v}_{i'\sigma(i')}-\sum_{j\in M}\hat{c}_j(S_j)-\sum_{i'\in N, i'\neq i}y_{i'}-\sum_{j\in M}z_j. \]
The utility of seller $j$ will be
\[ u_j=w_j-c_j(S_j)=\sum_{i\in N}\hat{v}_{i\sigma(i)}-\sum_{j'\in M}\hat{c}_{j'}(S_{j'})-\sum_{i\in N}y_i-\sum_{j'\in M,j'\neq j}z_{j'}. \]
In both cases we get that the utility is the welfare shifted by a constant term that depends on the $y$ and $z$ vectors. Since the mechanism will pick the allocation that maximizes welfare, we get that it is in the best interest of the agent to reveal her true information.

Efficiency follows trivially by the fact that we output the welfare-maximizing allocation of \textsc{welfare-alg}.

To prove the mechanism is ex-ante IR we show that the expected utility of any buyer and seller is non-negative. The expected utility of a buyer $i$ is
\[ \E[u_i]=\E\left[\sum_{i'\in N}v_{i'\sigma(i')}-\sum_{j\in M}c_j(S_j)-\sum_{i'\in N,i'\neq i}y_{i'}-\sum_{j\in M}z_j\right], \]
where the $\sigma(\cdot)$ and $S_j,j=1,2,\ldots,m$, represent the optimal assignment for each realization. If we call $W$ the expected optimal welfare, we get
\[ \E[u_i]=W-\sum_{i'\in N,i'\neq i}y_{i'}-\sum_{j\in M}z_j=W-W+y_i=y_i\ge 0. \]
Similarly for a seller $j$ we get
\[ \E[u_j]=\E\left[\sum_{i\in N}v_{i\sigma(i)}-\sum_{j'\in M}c_{j'}(S_{j'})-\sum_{i\in N}y_i-\sum_{j'\in M,j'\neq j}z_{j'}\right]=z_j\ge 0. \]
This proves ex-ante IR.

Ex-ante weakly BB follows from the fact that the expected sum of utilities equals the produced welfare:
\[ \E\left[\sum_{i\in N}u_i+\sum_{j\in M}u_j\right]=\sum_{i\in N}\E[u_i]+\sum_{j\in M}\E[u_j]=\sum_{i\in N}y_i+\sum_{j\in M}z_j=W. \]
This implies the total money paid out to sellers is in expectation equal to the total money extracted from the buyers.

Switching to the ex-ante $\alpha$-core property, as we saw in the previous paragraph, our mechanism guarantees expected utility $y_i$ to each buyer $i\in N$ and expected utility $z_j$ to each seller $j\in M$. For every set $S$ that contains buyers $S^N$ and sellers $S^M$, let $W^r(S)$ be the welfare the set can produce under realization $r$. Then
\[ \alpha\left(\sum_{i\in S^N}y^r_i+\sum_{j\in S^M}z^r_j\right)\ge W^r(S), \]
by the fact that vectors $y^r,z^r$ are given by \textsc{core-alg}. Taking the weighted average of this inequality with respect to the probabilities $q_r$ shows that
\[ \alpha\left(\sum_{i\in S^N}y_i+\sum_{j\in S^M}z_j\right)\ge \E[W^r(S)], \]
which proves the ex-ante $\alpha$-core property.
\qed
\end{proof}

The second step of Mechanism 1 might not be feasible in polynomial time. We now modify our mechanism to make it run in polynomial time (assuming \textsc{welfare-alg} and \textsc{core-alg} run in polynomial time) as follows. Define \emph{Mechanism 2} to be exactly like Mechanism 1, however we replace the second step with the following:
\begin{itemize}
\item Pre-processing: Sample a set $C$ of $c=n^2(n+m)^5/\epsilon^3$ realizations of agent types, for some small parameter $\epsilon>0$. For every sample $r\in C$, compute buyer utilities $y_i^r,i\in N,$ and seller utilities $z_j^r,j\in M$, that are non-negative, in the $\alpha$-core, and sum up to the optimal welfare using \textsc{core-alg}. Let
\[ y_i=\left(\sum_{r\in C}\frac{y_i^r}{c}\right)+\frac{\epsilon}{(n+m)^2},\]
be the slightly shifted average utility of buyer $i$ over all sampled realizations and let
\[ z_j=\left(\sum_{r\in C}\frac{z_j^r}{c}\right)+\frac{\epsilon}{(n+m)^2},\]
be the slightly shifted average utility of seller $j$ over all sampled realizations.
\end{itemize}

\begin{theorem}\label{poly-time-thm}
For arbitrarily small $\epsilon>0$, Mechanism 2 runs in time polynomial in $n,m$, and $1/\epsilon$, and is efficient, DSIC, and, with probability $1-\epsilon$, ex-ante weakly BB, $\epsilon$ ex-ante IR, and ex-ante in the $\alpha(1+\delta)$-core, where $\delta=2\epsilon/W$ and $W$ is the expected optimal welfare.
\end{theorem}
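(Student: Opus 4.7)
The plan is to closely follow the proof of Theorem~\ref{expo-time-thm}, using Hoeffding concentration to absorb the error introduced by replacing true expectations by sample averages. Polynomial running time is immediate: the sample size $c=n^{2}(n+m)^{5}/\epsilon^{3}$ is polynomial in $n,m,1/\epsilon$, each of the $c$ pre-processing calls to \textsc{welfare-alg} and \textsc{core-alg} is polynomial by assumption, and the closed-form prices and wages are computed in polynomial time thereafter. Efficiency and DSIC transfer verbatim from Theorem~\ref{expo-time-thm}: the allocation rule is unchanged, and since the pre-processed $y$ and $z$ depend only on the common prior (not on the agents' reports), each agent's utility is still the realized welfare shifted by a constant, which makes truthful reporting dominant regardless of the other agents' reports.

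The remaining three properties all rest on a single concentration event. For each agent and each subset $S\subseteq N\cup M$, I apply Hoeffding's inequality to the sample averages $Y_{i}=\frac{1}{c}\sum_{r\in C}y_{i}^{r}$, $Z_{j}=\frac{1}{c}\sum_{r\in C}z_{j}^{r}$, and $\hat W(S)=\frac{1}{c}\sum_{r\in C}W^{r}(S)$; since each summand is bounded by the maximum possible welfare, a deviation exceeding $\epsilon/(n+m)^{2}$ has probability at most $2\exp(-2(n+m)/\epsilon)$. A union bound over the $n+m$ agents and the $2^{n+m}$ coalitions keeps the total failure probability below $\epsilon$ for sufficiently small $\epsilon$, and this is exactly where the aggressive $1/\epsilon^{3}$ factor in $c$ is spent. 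Condition henceforth on the event $\mathcal{E}$ that $|Y_{i}-\bar y_{i}|$, $|Z_{j}-\bar z_{j}|$ and $|\hat W(S)-\E[W^{r}(S)]|$ are all at most $\epsilon/(n+m)^{2}$, where $\bar y_{i}=\E[y_{i}^{r}]$ and $\bar z_{j}=\E[z_{j}^{r}]$ denote the true expectations of the per-realization utilities produced by \textsc{core-alg}.

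On $\mathcal{E}$, the additive shift $\epsilon/(n+m)^{2}$ built into $y_{i}$ and $z_{j}$ ensures $y_{i}\ge\bar y_{i}\ge 0$ and $P:=\sum_{i}y_{i}+\sum_{j}z_{j}=\hat W+\epsilon/(n+m)\ge W$. Rewriting the expected buyer utility as $\E[u_{i}]=W-\sum_{i'\neq i}y_{i'}-\sum_{j}z_{j}=y_{i}-(P-W)$, the IR bound $\E[u_{i}]\ge\bar y_{i}-2\epsilon/(n+m)\ge -\epsilon$ follows from $P-W\le 2\epsilon/(n+m)$, giving $\epsilon$ ex-ante IR; the telescoping identity $\sum_{i}\E[u_{i}]+\sum_{j}\E[u_{j}]=(n+m)W-(n+m-1)P\le W$ then yields ex-ante WBB. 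For the core, average the per-sample $\alpha$-core inequality over $C$ and add the shift to obtain $\alpha\tilde T\ge \hat W(S)+\alpha|S|\epsilon/(n+m)^{2}$ with $\tilde T=\sum_{i\in S^{N}}y_{i}+\sum_{j\in S^{M}}z_{j}$; chaining with $\hat W(S)\ge\E[W^{r}(S)]-\epsilon/(n+m)^{2}$ gives $\alpha\tilde T\ge\E[W^{r}(S)]$ for $|S|\ge 1$. Finally, since $U=\sum_{i\in S^{N}}\E[u_{i}]+\sum_{j\in S^{M}}\E[u_{j}]=\tilde T-|S|(P-W)\ge\tilde T-2\epsilon$, multiplying by $\alpha(1+\delta)$ and invoking the calibration $\delta=2\epsilon/W$ converts the $2\alpha(1+\delta)\epsilon$ additive loss into the multiplicative slack $\delta\,\E[W^{r}(S)]$, producing $\alpha(1+\delta)U\ge\E[W^{r}(S)]$.

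The main obstacle is the ex-ante core property. Two things must be handled at once: a union bound over the $2^{n+m}$ coalitions, which forces the $1/\epsilon^{3}$ factor in the sample size and only works for $\epsilon$ small enough that $2(n+m)/\epsilon$ dominates $(n+m)\ln 2$; and the conversion of the additive $O(\epsilon)$ sampling error in $U$ versus $\tilde T$ into a multiplicative $(1+\delta)$ slack, which is precisely what pins down the choice $\delta=2\epsilon/W$. Once this conversion is set up, the IR, WBB, and core bounds all piggyback on the same event $\mathcal{E}$ and the same shift $\epsilon/(n+m)^{2}$, so the bookkeeping is the most delicate step but not the conceptually novel one.
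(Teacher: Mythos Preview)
Your argument is essentially correct, but it takes a detour that the paper avoids, and your diagnosis of where the sample size is ``spent'' is off.

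The paper does \emph{not} union-bound over the $2^{n+m}$ coalitions. It only concentrates the $n+m$ per-agent averages $Y_i,Z_j$ around their true expectations $y_i^*,z_j^*$, using Chebyshev (variance of each sample $\le n^2$, so $c=n^2(n+m)^5/\epsilon^3$ gives deviation $\le \epsilon/(n+m)^2$ with failure probability $\le \epsilon/(n+m)$, and a union bound over $n+m$ agents yields total failure $\le \epsilon$). This is precisely where the $1/\epsilon^3$ factor comes from; it is calibrated to Chebyshev with $n+m$ events, not to a Hoeffding bound over $2^{n+m}$ coalitions. For the core, the paper then simply invokes that the \emph{true} expectations $(y^*,z^*)$ are already in the $\alpha$-core (this is the content of the last paragraph of the proof of Theorem~\ref{expo-time-thm}: averaging the per-realization $\alpha$-core inequality over the \emph{actual} prior, not over the sample). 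From $\E[u_i]\ge y_i^*-2\epsilon/(n+m)$ one gets $U(S^N,S^M)\ge \sum_{S^N}y_i^*+\sum_{S^M}z_j^*-2\epsilon\ge W(S^N,S^M)/\alpha-2\epsilon$ directly, with no reference to any sample-averaged coalition welfare $\hat W(S)$.

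Your route---averaging the per-\emph{sample} core inequality to land on $\hat W(S)$, then concentrating $\hat W(S)$ around $\E[W^r(S)]$ for every $S$---reaches the same inequality but at the cost of an exponential union bound that only goes through when $2(n+m)/\epsilon$ dominates $(n+m)\ln 2$; the paper's argument has no such restriction. So the comparison is: the paper's approach is strictly simpler (Chebyshev suffices, only $n+m$ events, no constraint on how small $\epsilon$ must be relative to $n+m$), while yours works but introduces an artificial obstacle and misidentifies the reason for the sample-size choice. The final conversion of the additive $2\epsilon$ slack into the multiplicative $(1+\delta)$ factor is handled identically in both arguments.
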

\begin{proof}
It's not hard to see that the mechanism runs in polynomial time, assuming \textsc{welfare-alg} and \textsc{core-alg} run in polynomial time. Computing the allocation is a single invocation of \textsc{welfare-alg}, computing the $y_i,z_j$ variables requires a polynomial number of invocations of \textsc{core-alg} and computing the prices and wages are simple calculations.

Efficiency and the DSIC property follow exactly as in Theorem \ref{expo-time-thm}. To argue about the remaining properties, we first study the values of $y_i,i\in N$, $z_j,j\in M$, that we use in the mechanism. Let $y_i^*,i\in N$, $z_j^*,j\in M$, be the corresponding values that are produced in Mechanism \ref{expo-time-mech}, i.e., the expected values over all realizations. Since \textsc{core-alg} is deterministic, these values are fixed and well-defined. We claim that
\[ y_i\in \left[y_i^*,y_i^*+\frac{2\epsilon}{(n+m)^2}\right] \text{ with probability at least } 1-\frac{\epsilon}{n+m}.\]
Focus on a single $y_i$ and let $Y$ be a random variable equal to the average of $y_i^r$ over all samples $r\in C$. Since all values are in $[0,1]$, welfare is at most $n$ and, hence, so is every $y_i^r$. This implies the variance of any such sample is at most $n^2$, and the variance of their mean, $Y$, is at most $n^2/c$. Then, Chebyshev's inequality:
\[ \text{Pr}\left[\left|Y-y_i^*\right|\ge k\frac{n}{\sqrt{c}}\right]\le \frac{1}{k^2}, \]
for $k=\sqrt{(n+m)/\epsilon}$ and $c$ as in Mechanism 2 gives us 
$\text{Pr}\left[\left|Y-y_i^*\right|\ge \frac{\epsilon}{(n+m)^2}\right]\le \frac{\epsilon}{n+m}$.
Observe that our $y_i$ is simply $Y$ shifted by $\epsilon/(n+m)^2$, which proves our original claim. The calculation for each $z_j$ is identical. Now, we may observe that the probability that all $y_i$ and $z_j$ satisfy the condition in our claim is at least
$\left(1-\frac{\epsilon}{n+m}\right)^{n+m}\ge 1-\epsilon$.

At this point, given the fact that with high probability every $y_i$ and $z_j$ are at least (and very close to) the corresponding $y_i^*$ and $z_j^*$, we may argue about weakly BB, IR, and the core. We first observe that, with $W$ the expected optimal welfare, we get
\begin{align*}
\E\left[\sum_{i\in N}p_i-\sum_{j\in M}w_j\right]&=\E\left[(n+m-1)\left(\sum_{j\in M}c_j(S_j)-\sum_{i\in N}v_{i\sigma(i)}+\sum_{i\in N}y_i+\sum_{j\in M}z_j\right)\right]\\
&\ge (n-m+1)\left(\sum_{j\in M}\E[c_j(S_j)]-\sum_{i\in N}\E[v_{i\sigma(i)}]+\sum_{i\in N}y_i^*+\sum_{j\in M}z_j^*\right)\\
&=(n-m+1)(-W+W)=0,
\end{align*}
where the $\sigma(\cdot)$ and $S_j,j=1,2,\ldots,m$, represent the optimal assignment for each realization. This proves the mechanism is WBB.

We now focus on the expected utility of buyer $i$. We get
\begin{align*}
\E[u_i]&=\E\left[\sum_{j\in M}\sum_{i'\in S_j}v_{i'j}-\sum_{j\in M}c_j(S_j)-\sum_{i'\in N,i'\neq i}y_{i'}-\sum_{j\in M}z_j\right]\\
&=W-\sum_{i'\in N,i'\neq i}y_{i'}-\sum_{j\in M}z_j\\
&\ge W-\sum_{i'\in N,i'\neq i}\left(y_{i'}^*+\frac{2\epsilon}{(n+m)^2}\right)-\sum_{j\in M}\left(z_j^*+\frac{2\epsilon}{(n+m)^2}\right)\\
&\ge W-W+y_i^*-\frac{2\epsilon}{n+m} \ge y_i^*-\frac{2\epsilon}{n+m}\ge -\epsilon.
\end{align*}
The calculation for each seller $j$ is similar. This proves the mechanism is $\epsilon$ ex-ante IR.

We finally focus on the core property. The utilities $(y^*,z^*)$ are in the $\alpha$-core, as established in Theorem \ref{expo-time-thm}. As we saw in the previous paragraph, the expected utility of a buyer $i$ is at least $y_i^*-2\epsilon/(n+m)$ and the expected utility of each seller $j$ is at least $z_j^*-2\epsilon/(n+m)$. Then, for the total utility of any subset of buyers $S^N$ and any subset of sellers $S^M$ who generate expected welfare $W\left(S^N,S^M\right)$, we get:
\begin{align*}
U\left(S^N,S^M\right)&=\sum_{i\in S^N}\E[u_i]+\sum_{j\in S^M}\E[u_j]\\
&\ge\sum_{i\in S^N}\left(y_i^*-\frac{2\epsilon}{n+m}\right)+\sum_{j\in S^M}\left(z_j^*-\frac{2\epsilon}{n+m}\right)\\
&\ge\sum_{i\in S^N}y_i^*+\sum_{j\in S^M}z_j^*-2\epsilon\ge \frac{W\left(S^N,S^M\right)}{\alpha}-2\epsilon.
\end{align*}
Rearranging we get:
\begin{align*}
W\left(S^N,S^M\right)\le \alpha\left[U\left(S^N,S^M\right)+2\epsilon\right]\le \alpha\left[U\left(S^N,S^M\right)+2\epsilon\frac{U\left(S^N,S^M\right)}{W}\right]\le \alpha(1+\delta)U\left(S^N,S^M\right),
\end{align*}
where $\delta$ is as in the theorem statement. The second inequality follows by the fact that any subset of agents are offered expected utilities at most the expected optimal welfare which, in turn, follows by weakly BB of the mechanism. This completes our proof of the core property and the theorem.
\qed
\end{proof}

\subsection{Welfare Maximization and Core Computation Algorithms}
\label{sec-algs}

In this section we further discuss \textsc{welfare-alg} and \textsc{core-alg}. We begin with \textsc{welfare-alg} and note that in Mechanism \ref{expo-time-mech}, for which no run-time guarantees are provided, \textsc{welfare-alg} is simply exhaustive search and is available in all models. For Mechanism 2, \textsc{welfare-alg} must be an algorithm that solves the optimization problem of assigning buyers to sellers in polynomial time. This can be done in several models (as we explain in upcoming sections), such as the case with negative GS cost functions and the case with a single uncapacitated seller with a submodular cost function.

Our core computation algorithm, \textsc{core-alg}, relies on the following primal-dual pair of linear programs. The \textsc{primal}:
\begin{align*}
\text{maximize~~~~} & \sum_{j\in M}\sum_{S\subseteq N, |S|\le k_j}x_{jS}\left(\sum_{i\in S}v_{ij}-c_j(S)\right) & & \\
\text{subject to~~~~} & \sum_{S\subseteq N}x_{jS}\le 1 & & \forall j\in M\\
& \sum_{j\in M}\sum_{S\ni i}x_{jS}\le 1 & & \forall i\in N\\
& x_{jS}\ge 0 & & \forall j\in M,\forall S\subseteq N
\end{align*}
and the \textsc{dual}:
\begin{align*}
\text{minimize~~~~} & \sum_{i\in N}y_i+\sum_{j\in M}z_j & & \\
\text{subject to~~~~} & \sum_{i\in S}y_i+z_j\ge\sum_{i\in S}v_{ij}-c_j(S) & & \forall j\in M,\forall S\subseteq N, |S|\le k_j\\
& y_i\ge 0 & & \forall i\in N\\
& z_j\ge 0 & & \forall j\in M
\end{align*}
Let $W^*$ be the optimal value of \textsc{primal} and let $W$ be the optimal value among integral solutions to \textsc{primal}. The utilities that \textsc{core-alg} outputs are precisely the dual variables scaled by $W/W^*$. The following theorem shows that these values are indeed in the approximate core.
\begin{theorem}
Let $(y^*,z^*)$ be the solution to \textsc{dual} and let $(y,z)=(y^*,z^*) W/W^*$, where $W^*$ is the optimal value for \textsc{primal} and $W$ the value of the integral optimal solution to \textsc{primal}. Then $(y,z)$ gives utilities $y_i$ for the buyers $i\in N$ and utilities $z_j$ for the sellers $j\in M$ that are in the $\alpha$-approximate core, with $\alpha$ the integrality gap of \textsc{primal}.
\end{theorem}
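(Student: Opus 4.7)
The plan is to lean on LP duality together with the integrality gap of \textsc{primal}. First I would invoke strong duality to get $\sum_{i\in N} y_i^* + \sum_{j\in M} z_j^* = W^*$, and note that after the rescaling $(y,z) = (y^*, z^*) \cdot W/W^*$, non-negativity is preserved (since $W,W^* \ge 0$ and $y^*,z^* \ge 0$ by dual feasibility) and the utilities now sum to $W$ exactly. This already disposes of the two auxiliary properties --- non-negativity and summing to the optimal integral welfare --- that \textsc{core-alg} is required to satisfy in Mechanism \ref{expo-time-mech}.

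The heart of the argument is the $\alpha$-core inequality. Fix a coalition consisting of buyers $S^N \subseteq N$ and sellers $S^M \subseteq M$, and let $W(S^N,S^M)$ denote the best integer welfare they can produce by matching internally. The key observation is that the restriction of $(y^*, z^*)$ to the indices in $S^N$ and $S^M$ is a feasible dual solution for the \textsc{dual} of the coalition's cost-sharing subproblem, because the coalition's constraints (ranging only over $j \in S^M$ and $S \subseteq S^N$ with $|S|\le k_j$) form a subset of the full \textsc{dual} constraints, which are already satisfied by $(y^*,z^*)$. Weak duality applied to the coalition's subproblem therefore yields
\[
\sum_{i\in S^N} y^*_i + \sum_{j\in S^M} z^*_j \;\ge\; W^*(S^N,S^M) \;\ge\; W(S^N,S^M),
\]
where $W^*(S^N,S^M)$ is the fractional optimum of the coalition's \textsc{primal} and the second inequality is the trivial ``LP $\ge$ IP'' bound.

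Rescaling by $W/W^*$ and multiplying both sides of the target inequality by $\alpha$, the $\alpha$-core condition reduces to $\alpha \cdot W/W^* \ge 1$, i.e.\ $W^*/W \le \alpha$, which is exactly the definition of the integrality gap applied to the full instance. Stitching these pieces together gives $\alpha\bigl(\sum_{i\in S^N} y_i + \sum_{j\in S^M} z_j\bigr) \ge W(S^N,S^M)$ for every coalition, as required.

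The one subtle point I expect to need care about is keeping the two integrality-gap applications separate: the global rescaling factor $W/W^*$ must be bounded using the gap on the full market (not the coalition's LP, whose integrality gap could in principle be larger or smaller), while the coalition step uses only the trivial fractional-vs-integral bound. A minor degenerate case is $W^* = 0$, where $W=0$ as well, the scaling is vacuous, and the $\alpha$-core condition reduces to $0 \ge 0$; this can be handled by picking $(y,z) = 0$ directly.
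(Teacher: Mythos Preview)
Your proposal is correct. The sum-to-$W$ computation and the integrality-gap step match the paper exactly, but your treatment of the core inequality differs slightly in presentation. The paper first reduces to coalitions containing a \emph{single} seller: it observes that if any multi-seller coalition can beat $\alpha$ times its utility, then by decomposing the coalition's optimal assignment by seller and averaging, at least one single-seller sub-coalition must also beat it. Once reduced to a single seller $j$ with buyer set $S$, the $\alpha$-core inequality is then an immediate consequence of one specific \textsc{dual} constraint, namely $\sum_{i\in S}y_i^*+z_j^*\ge\sum_{i\in S}v_{ij}-c_j(S)$, followed by the same rescaling and integrality-gap bound you use.

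Your route instead handles a general coalition $(S^N,S^M)$ in one shot by restricting $(y^*,z^*)$ to the coalition and invoking weak duality on the coalition's sub-LP. This is sound, since the sub-LP's dual constraints are indeed a subset of the full \textsc{dual}'s constraints, and it is arguably cleaner because it avoids the intermediate single-seller reduction. The paper's decomposition, on the other hand, makes it explicit that the core condition is really encoded one constraint at a time in \textsc{dual}, which is conceptually useful later when constructing separation oracles. Both arguments are equivalent in strength; yours is just a bit more compressed.
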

\begin{proof}
We get
\[ \sum_{i\in N}y_i+\sum_{j\in M}z_j=\sum_{i\in N}y_i^*\frac{W}{W^*}+\sum_{j\in M}z_j^*\frac{W}{W^*}=\frac{W}{W^*}\left(\sum_{i\in N}y_i^*+\sum_{j\in M}z_j^*\right)=W, \]
which shows the sum of the agents' utilities equals the welfare they produce in the optimal assignment. This shows the first property of the core.

Next, we argue that there exists no set of agents who can form a coalition and produce welfare higher than $\alpha$ times the sum of their utilities. We first observe that it suffices to focus on coalitions with a single seller and several buyers by proving that if no such coalition (with one seller) can improve their welfare, then no other coalition (with multiple sellers) can improve their welfare either. More precisely, we argue about the contrapositive: if there exists a coalition with multiple sellers that improves its members' welfare, then there also exists a coalition with a single seller that improves its members' welfare. To see this suppose there is a coalition with multiple sellers $1,2,\ldots,l$, and consider the optimal assignment of buyers in the coalition to these sellers, in which $S_j$ is the set of buyers assigned to seller $j$. Define the improvement of the coalition as
\[ \Delta=\sum_{j=1,2,\ldots,l}\left(\sum_{i\in S_j}v_{ij}-c_j(S_j)\right)-\alpha\sum_{j=1,2,\ldots,l}\left(\sum_{i\in S_j}y_i+z_j\right), \]
Also define the improvement around each seller as
\[ \Delta_j=\left(\sum_{i\in S_j}v_{ij}-c_j(S_j)\right)-\alpha\left(z_j+\sum_{i\in S_j}y_i\right). \]
Clearly $\Delta=\sum_{j=1,2,\ldots,l}\Delta_j$, which implies that if $\Delta>0$, then there also exists some $\Delta_j>0$, which in turn implies our statement that, if there exists an improving coalition with multiple sellers, then there also exists an improving coalition with one seller.

Hence, at this point, it suffices to prove that there does not exist a coalition with one seller that can produce welfare higher than $\alpha$ times the sum of the agents' utilities in our solution. Let this coalition be seller $j$ and buyers $S$. We get:
\[ \alpha\left(\sum_{i\in S}y_i+z_j\right)=\alpha\frac{W^*}{W}\left(\sum_{i\in S}y_i^*+z_j^*\right)\ge\alpha\frac{W^*}{W}\left(\sum_{i\in S}v_{ij}-c_j(S)\right)\ge \sum_{i\in S}v_{ij}-c_j(S), \]
where the first inequality follows from the first dual constraint and the second inequality follows by the fact that $W/W^*$ is at most the integrality gap $\alpha$. This proves the second property of the $\alpha$-core.
\qed
\end{proof}

With respect to running time considerations, we need to be able to solve \textsc{dual} in polynomial time. 

For the case where we only have one submodular seller we can show that the integrality gap of \textsc{primal} is 1 and we can solve the primal optimally in polynomial time. 

This also holds when we have multiple sellers with NGS cost functions. These functions, which are strictly more general than linear functions, represent the limit of tractability for solving \textsc{primal} optimally over the space of integral solutions. Once we leave the space of NGS functions, we have to rely on approximately efficient mechanisms to achieve our results.

This implies that for these two cases, we can utilize Mechanism 2 to achieve in polynomial time, a mechanism that is efficient, DSIC, and, with probability $1-\epsilon$, ex-ante weakly BB, $\epsilon$ ex-ante IR, and ex-ante in the $(1+\delta)$-core, where $\delta=2\epsilon/W$ and $W$ is the expected optimal welfare. Please refer to \ref{sec-one} and \ref{sec-gs} for more details.



\vspace{-0.1 in}
\section{Approximately Efficient Mechanism}
\label{sec-approx}
\vspace{-0.1 in}
In this section, we present a polynomial time mechanism that addresses the case of intractable models, such as multiple submodular sellers. Our mechanism will achieve approximate efficiency and will be in the approximate core. The mechanism requires access to an algorithm that computes a convex combination of integral solutions that is equal to $1/\gamma$ times the fractional optimal solution of the \textsc{primal} of Section \ref{sec-algs}, for some given $\gamma$. We discuss this algorithm, which we call \textsc{approx-welfare-alg}, further in Section \ref{sec-alg-2}. We now present the mechanism's specifics.
\\
\vspace{-0.1 in}
\noindent\textbf{Mechanism 3.} Specifics:
\begin{itemize}
\item Allocation Rule: Given the reported values $v_{ij}$ for $i\in N,j\in M$, and cost functions $c_j:2^N\to \mathbb{R}$ for $j\in M$, let $x^*$ be the optimal solution to the \textsc{primal} linear program in Section \ref{sec-algs}. Our allocation is the lottery $x$ that is output by \textsc{approx-welfare-alg}.
\item Pre-processing: Sample a set $C$ of $c=n^2(n+m)^5/\epsilon^3$ realizations of agent types, for some small parameter $\epsilon>0$. For every sample $r\in C$, compute buyer utilities $y_i^r,i\in N,$ and seller utilities $z_j^r,j\in M$, by solving the \textsc{dual} of Section \ref{sec-algs}. Let
$y_i=\left(\sum_{r\in C}\frac{y_i^r}{c}\right)+\frac{\epsilon}{(n+m)^2}$,
be the slightly shifted average utility of buyer $i$ over all sampled realizations and let
$z_j=\left(\sum_{r\in C}\frac{z_j^r}{c}\right)+\frac{\epsilon}{(n+m)^2}$,
be the slightly shifted average utility of seller $j$ over all sampled realizations. Also, define
\[ v_i(x^*)=\sum_{j\in M, S\ni i}x^*_{jS}v_{ij} \text{~~~~and~~~~} c_j(x^*)=\sum_{S\subseteq N}x^*_{jS}c_j(S), \]
which can be interpreted as the extracted value of buyer $i$ and the incurred cost of seller $j$ under fractional solution $x^*$.
\item Buyer prices: The price charged to buyer $i$ is
\[ p_i=\frac{1}{\gamma}\left(\sum_{j\in M}c_j(x^*)-\sum_{i'\in N,i'\neq i}v_{i}(x^*)+\sum_{i'\in N,i'\neq i}y_{i'}+\sum_{j\in M}z_j\right). \]
\item Seller wages: The wage paid to seller $j$ is
\[ w_j=\frac{1}{\gamma}\left(\sum_{i\in N}v_{i}(x^*)-\sum_{j'\in M,j'\neq j}c_{j'}(x^*)-\sum_{i\in N}y_i-\sum_{j'\in M,j'\neq j}z_{j'}\right). \]
\end{itemize}

\begin{theorem}\label{thm41}
For arbitrarily small $\epsilon>0$, Mechanism 3 runs in time polynomial in $n,m$, and $1/\epsilon$, and is $\gamma$-efficient, DSIC, and, with probability $1-\epsilon$, ex-ante weakly BB, $\epsilon$ ex-ante IR, and ex-ante in the $\gamma(1+\delta)$-core, where $\delta=2\epsilon/W$ and $W$ is the expected welfare of the mechanism.
\end{theorem}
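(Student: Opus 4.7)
The proof follows the template of Theorem \ref{poly-time-thm} almost step-for-step, with the main conceptual twist being that Mechanism \ref{approx-mech} is essentially a scaled VCG on the \emph{fractional} welfare-maximization LP (with the actual allocation implemented as the lottery $x=x^*/\gamma$). First I would dispatch running time: solving \textsc{primal}/\textsc{dual} in polynomial time on each of the $c=\mathrm{poly}(n,m,1/\epsilon)$ samples plus one invocation of \textsc{approx-welfare-alg} at execution clearly runs in polynomial time. For $\gamma$-efficiency, observe that the lottery $x$ is a convex combination of integer allocations equal to $x^*/\gamma$, so its expected welfare is, by linearity, $W_{\mathrm{frac}}(x^*)/\gamma$, which is at least $W_{\mathrm{int}}/\gamma$ since the fractional optimum upper-bounds the integral optimum.

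For DSIC, I would plug in the reported values $\hat v$ and compute the true expected utility of a buyer $i$ whose actual value profile is $v_i$, assuming all other agents report truthfully. Because the extraction from $i$ under the lottery is $\sum_{j,S\ni i} (x^*_{jS}(\hat v)/\gamma)\,v_{ij}$, and the non-$i$ terms in $p_i$ use reported values (which equal true values for $i'\ne i$), the utility collapses to
\[
\E[u_i] \;=\; \frac{1}{\gamma}\Bigl[\,W_{\mathrm{frac}}\!\bigl(x^*(\hat v)\bigr)\,-\,\sum_{i'\ne i}y_{i'}\,-\,\sum_{j\in M}z_j\Bigr],
\]
where the bracketed tail is independent of $\hat v_i$. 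Since $y,z$ are fixed by the pre-processing and do not depend on any agent's report, maximizing $\E[u_i]$ is equivalent to maximizing $W_{\mathrm{frac}}(x^*(\hat v))$, which the LP-solver does by picking $x^*(\hat v)$ optimally; reporting truthfully ensures this maximum aligns with the true fractional welfare, so truthfulness is dominant. The calculation for sellers is symmetric.

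For the probabilistic properties, I would repeat the Chebyshev argument of Theorem \ref{poly-time-thm} verbatim: letting $y_i^*=\E_r[y_i^r]$ and $z_j^*=\E_r[z_j^r]$ denote the exact expected dual variables, the choice of $c$ guarantees that with probability at least $1-\epsilon$, every $y_i\in[y_i^*,\,y_i^*+2\epsilon/(n+m)^2]$ and similarly for $z_j$. The critical algebraic identity is now LP duality: $\sum_i y_i^* + \sum_j z_j^* = \E[W_{\mathrm{frac}}]$, rather than the integral welfare identity used in the previous proof. This drives the WBB and IR bookkeeping (after the $1/\gamma$ scaling): the expected sum $\sum_i p_i-\sum_j w_j$ equals $(n+m-1)/\gamma$ times $\sum_i y_i+\sum_j z_j-\E[W_{\mathrm{frac}}]\ge 0$, and the expected utility of buyer $i$ simplifies to $(1/\gamma)[y_i^*-2\epsilon/(n+m)]\ge -\epsilon$ (using $\gamma\ge 1$).

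The main obstacle, and where the factor $\gamma$ enters the core guarantee, is the core property. Here I would mimic the per-realization decomposition in the proof of Theorem \ref{expo-time-thm}: for each realization $r$, the dual constraint gives $\sum_{i\in S}y_i^r+z_j^r\ge\sum_{i\in S}v_{ij}^r-c_j^r(S)$ for every single-seller coalition, and the $\Delta=\sum_j\Delta_j$ argument extends this to multi-seller coalitions. Averaging over realizations yields $\sum_{i\in S^N}y_i^*+\sum_{j\in S^M}z_j^*\ge W(S^N,S^M)$. Since the mechanism pays each agent only $1/\gamma$ of its dual value (minus the $2\epsilon/(n+m)$ Chebyshev slack), summing over a coalition of size at most $n+m$ gives $\gamma\,U(S^N,S^M)+2\epsilon\ge W(S^N,S^M)$, and rearranging via $\delta=2\epsilon/W$ exactly as in Theorem \ref{poly-time-thm} yields $W(S^N,S^M)\le\gamma(1+\delta)\,U(S^N,S^M)$, completing the proof.
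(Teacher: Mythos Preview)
Your proposal is correct and follows essentially the same route as the paper: both treat Mechanism \ref{approx-mech} as a $1/\gamma$-scaled VCG on the fractional LP, reuse the Chebyshev sampling argument of Theorem \ref{poly-time-thm} verbatim, replace the integral-welfare identity by LP duality $\sum_i y_i^*+\sum_j z_j^*=\E[W_{\mathrm{frac}}]$, and extract the $\gamma(1+\delta)$-core bound from the per-realization dual constraints followed by the same $\delta$-rearrangement. Your DSIC paragraph unnecessarily prefaces with ``assuming all other agents report truthfully,'' but the computation you actually carry out (utility $=(1/\gamma)[W_{\mathrm{frac}}(x^*(\hat v))-\text{report-independent constants}]$) already yields the dominant-strategy conclusion regardless of others' reports, matching the paper.
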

\begin{proof}
It is clear to see that the mechanism runs in polynomial time since it makes only polynomially many invocations to algorithms with polynomial running times. The fact that the mechanism is (in expectation) $\gamma$-efficient follows directly by the fact that the output is a lottery over integral solutions that is precisely a $1/\gamma$ approximation to the fractional optimal solution of \textsc{primal}.

The (expected over lottery realizations) utility of a buyer $i\in N$ is
\[ u_i=\sum_{j\in M, S\ni i}x_{jS}v_{ij}-p_i=\frac{1}{\gamma}v_i(x)-p_i=\frac{1}{\gamma}\left(W(x)+\sum_{i'\in N,i'\neq i}y_{i'}+\sum_{j\in M}z_j\right), \]
where $W(x)$ is the objective of \textsc{primal}. This is precisely the expected social welfare plus some constant terms scaled by $1/\gamma$. Then, it is in the interest of the buyer to reveal her true $v_{ij}$ values so that the mechanism will maximize her utility. Similarly, for a seller $j$ we get
\[ u_j=w_j-\sum_{S\subseteq N}x_{jS}c_j(S)=\frac{1}{\gamma}\left(W(x)+\sum_{i\in N}y_{i}+\sum_{j'\in M,j'\neq j}z_{j'}\right), \]
which proves truth-telling is an optimal strategy for each seller as well. Then we get that the mechanism is DSIC.

Now, let $y_i^*,i\in N$, and $z_j^*,j\in M$, be the average utilities given by \textsc{dual} over all possible type realizations. Similarly to the proof of Theorem \ref{poly-time-thm}, we can get that, with probability at least $1-\epsilon$, every
\[ y_i\in\left[y_i^*,y_i^*+\frac{2\epsilon}{(n+m)^2}\right] \]
and every
\[ z_j\in\left[z_j^*,z_j^*+\frac{2\epsilon}{(n+m)^2}\right]. \]
Now let $W^*$ be the expected optimal value of \textsc{primal} and \textsc{dual}. Then, we get that with probability at least $1-\epsilon$:
\begin{align*}
\E\left[\sum_{i\in N}p_i-\sum_{j\in M}w_j\right]&=\frac{1}{\gamma}(n+m-1)\left(\E\left[\sum_{j\in M}c_j(x^*)-\sum_{i\in N}v_i(x^*)\right]+\sum_{i\in N}y_i+\sum_{j\in M}z_j\right)\\
&\ge \frac{1}{\gamma}(n+m-1)\left(\E\left[\sum_{j\in M}c_j(x^*)-\sum_{i\in N}v_i(x^*)\right]+\sum_{i\in N}y^*_i+\sum_{j\in M}z^*_j\right)\\
&=(n-m+1)(-W+W)=0,
\end{align*}
which proves weakly BB.
We now focus on the expected utility of buyer $i$. We get
\begin{align*}
\E[u_i]&=\E\left[v_i(x)-\frac{1}{\gamma}\left(\sum_{i'\in N,i'\neq i}v_{i'}(x^*)-\sum_{j\in M}c_j(x^*)-\sum_{i'\in N,i'\neq i}y_{i'}-\sum_{j\in M}z_j\right)\right]\\
&=\E\left[\frac{1}{\gamma}\left(\sum_{i\in N}v_i(x^*)-\sum_{j\in M}c_j(x^*)-\sum_{i'\in N,i\neq i}y_{i'}-\sum_{j\in M}z_j\right)\right]\\
&=\frac{1}{\gamma}\left(W^*-\sum_{i'\in N,i\neq i}y_{i'}-\sum_{j\in M}z_j\right)\\
&\ge\frac{1}{\gamma}\left[W^*-\sum_{i'\in N,i'\neq i}\left(y_{i'}^*+\frac{2\epsilon}{(n+m)^2}\right)-\sum_{j\in M}\left(z_j^*+\frac{2\epsilon}{(n+m)^2}\right)\right]\\
&=\frac{1}{\gamma}\left(W^*-W^*+y_i^*-\frac{2\epsilon}{n+m}\right)\ge-\epsilon.
\end{align*}
The calculation for each seller $j$ is similar. This proves the mechanism is $\epsilon$ ex-ante IR.

We finally focus on the core property. For given subsets of buyers and sellers $S^N$ and $S^M$, let $W\left(S^N,S^M\right)$ be the welfare they can produce. We get
\begin{align*}
U(S^N,S^M)&=\sum_{i\in S^N}\E[u_i]+\sum_{j\in S^M}\E[u_j]\\
&\ge\frac{1}{\gamma}\left(\sum_{i\in S^N}y_i^*+\sum_{j\in S^M}z_j^*-2\epsilon\right)\\
&\ge\frac{W\left(S^N,S^M\right)}{\gamma}-2\epsilon.
\end{align*}
Rearranging we get:
\begin{align*}
W\left(S^N,S^M\right)\le \gamma\left[U\left(S^N,S^M\right)+2\epsilon\right]\le \gamma\left[U\left(S^N,S^M\right)+2\epsilon\frac{U\left(S^N,S^M\right)}{W}\right]\le \gamma(1+\delta)U\left(S^N,S^M\right),
\end{align*}
where $\delta$ is as in the theorem statement. The second inequality follows by the fact that any subset of agents are offered expected utilities at most the expected optimal welfare which, in turn, follows by weakly BB property of the mechanism. This completes our proof of the core property and the theorem.
\qed
\end{proof}

\vspace{-0.2 in}
\subsection{Approximate Welfare Algorithm}
\label{sec-alg-2}
In this section, we present \textsc{approx-welfare-alg}, which outputs a convex combination of integral solutions that is precisely $x^*/\gamma$ with $x^*$ the fractional optimal solution to \textsc{primal}. We study two different settings. One is for general cost functions when we treat the capacity constraint of sellers as a constant. The other setting is when the cost functions are super additive.
\subsubsection{General Cost Functions with Constant Capacity Constraints}
In this section we study the case where sellers can have general cost functions and the capacity of all the sellers is a constant $C$. One can easily extend our result to the case where these constants are different for different sellers, however, for ease of exposition, we assume the same capacity constraint for all sellers.

Our approximate welfare algorithms consists of two parts. First we need to argue that we can find an optimal feasible solution to \textsc{primal} linear program in Section \ref{sec-algs}. And next we need to argue that we can in fact assign the buyers to sellers in a way that the assignment is a convex combination of integral solutions such that the expected assignment is exactly equal to $1/\gamma$ times the fractional optimal solution for some constant $\gamma$. This equality is important, since otherwise, the proof of DSIC property in Theorem \ref{thm41} does not go through.

To optimally solve the primal, we first explain how we can design a separation oracle for the dual. This is not too hard, since for each seller $j\in M$, given the values of $\{y\}_i, \{z\}_j$, we have to be be able to check the following constraint in polynomial time

$$\sum_{i\in S}y_i+z_j\ge\sum_{i\in S}v_{ij}-c_j(S) ~~~~~~~~~~~~ \forall S\subseteq N, |S|\le C.$$

However, since $C$ is just a constant, we can just enumerate over all such subsets and check them one by one. This allows us to solve the \textsc{primal} optimally to achieve solution $x^*$. Next we will show how to round this optimal fractional solution into an integral solution, such that for each $j\in M$ and $S\subseteq N$, the probability that seller $j$ is assigned set $S$ is exactly $x^*_{j, S}/\gamma$ for $\gamma=C+1$.

To do this, first pick an arbitrary order for all pairs $(j, S)$ where $j\in M$ and $S\subseteq N$ and call them $(j_1, S_1), (j_2, S_2), \ldots, (j_k, S_k)$ where $k$ is the total number of such pairs. Then run the following rounding algorithm.

\begin{algorithm}[H]
\caption{Rounding Algorithm}
\begin{algorithmic}[1]
\State \textbf{input}: An optimal solution $x^*$ to \textsc{primal}.
\State \textbf{output}: A random integral and feasible assignment $x$ such that $\mathbb{E}[x_{j, S}]=\frac{x^*_{j, S}}{\gamma+1}$ for all $j\in M$ and $S\subseteq N$.
\State \textbf{For} i=1\ldots k \textbf{do}:
\State \quad \textbf{If} none of the buyers in $S_i$ have been assigned to a seller, and, no set of buyers have been assigned to seller $j_i$ \textbf{then}:
\State \quad \quad Define set $R$ as follows
$$R=\{l | l<i, (S_i\cap S_l \neq \emptyset \text{ or } j_l=j_i)\}$$
\State \quad \quad Assign buyers in $S_i$ to seller $j_i$ with probability $\frac{x_{j_i, S_i}}{\gamma (1-\sum_{l\in R} x_{j_l, S_l}/\gamma)}$.
\State \textbf{End}
\end{algorithmic}
\end{algorithm}

We have the following result.

\begin{theorem}\label{thm42}
	Given an optimal solution $x^*$ to \textsc{primal}, Algorithm 1 will generate a random integral assignment $x$, such that
	$$\mathbb{E}[x_{j, S}]=\frac{1}{C+1} x^*_{j, S}~~~~~~~~ \forall j\in M, S\subseteq N.$$
\end{theorem}
\begin{proof}
	To prove this, we use induction. Note that for $i=1$, the algorithm will assign the set $S_1$ to seller $j_1$ with probability $x_{j_1, S_1}/\gamma$. Now suppose this is true for all $i<k$. We will show that the algorithm will also assign set $S_k$ to seller $j_k$ with probability $\frac{x^*_{j_k, S_k}}{\gamma}$. By chain rule, and definition of Algorithm 1, we have	
\begin{align*}
	\mathbb{P}[S_k \text{ gets assigned to } j_k]=&\mathbb{P}[\text{buyers in } S_k \text{ or the seller } j_k \text{ are not assigned before iteration } k ]\times\\
& \frac{x^*_{j_i, S_i}}{\gamma (1-\sum_{l\in R} x^*_{j_l, S_l}/\gamma)}.\end{align*}
\noindent
Now by the induction hypothesis, we have
\begin{align*}
	&\mathbb{P}[\text{buyers in } S_k \text{ or the seller } j_k \text{ are not assigned before iteration } k ]= \\&1-\mathbb{P}[\text{a buyer in } S_k \text{ or the seller } j_k \text{ is assigned before iteration } k ]=(1-\sum_{l\in R} x^*_{j_l, S_l}/\gamma).
\end{align*}
\noindent
Now by putting these two equalities together, we have:
\begin{align*}
	\mathbb{P}[S_k \text{ gets assigned to } j_k]=\frac{x^*_{j_i, S_i}}{\gamma (1-\sum_{l\in R} x^*_{j_l, S_l}/\gamma)} \times (1-\sum_{l\in R} x^*_{j_l, S_l}/\gamma)=\frac{x^*_{j_i, S_i}}{\gamma}.
\end{align*}
\noindent
There is just one additional piece that we need to prove which is that for all iterations of Algorithm 1, $\sum_{l\in R} x_{j_l, S_l}/\gamma\leq 1$. To see this, note that by the first two constraints of \textsc{primal}, we have 
\begin{align*}
& \sum_{S\subseteq N}x^*_{jS}\le 1 & & \forall j\in M,\\
& \sum_{j\in N}\sum_{S\ni i}x^*_{jS}\le 1 & & \forall i\in N.
\end{align*}
\noindent Also since each set $S$ has size at most $C$, we must have for any $j\in M$ and $S\subseteq N$
\begin{align*}
	&\sum_{j'\in M} \sum_{S'\subseteq N, S'\cap S\neq \emptyset} x^*_{j', S'}\leq C,\\
	&\sum_{S'\subseteq N} x^*_{j, S'}\leq 1.
\end{align*}
\noindent
Therefore, we must have 
$$\sum_{l\in R} x_{j_l, S_l}/\gamma\leq  \sum_{j\in M} \sum_{S\subseteq N, S\cap S_i\neq \emptyset} x^*_{j, S}/\gamma + \sum_{S\subseteq N} x^*_{j_i, S}/\gamma \leq (C+1)/\gamma \leq 1$$
and this finishes the proof.
\qed
\end{proof}

\vspace{-0.2 in}
\subsubsection{Super Additive Cost Functions}
The idea in this section is to, similar to the previous section, first solve the \textsc{primal} linear program to obtain an optimal fractional solution $x^*$. This is in general a hard problem, however, we assume in this section, that we are given access to a demand oracle for the problem. The demand oracle allows us to design a separation oracle for the \textsc{dual}, which in turn allows us to obtain $x^*$ in polynomial time. We will then use a method inspired by \cite{DNS05}, that can round this fractional solution to an integral solution with a $\sqrt{n}$ approximation ratio in polynomial time. Finally, following the following lemma due to \cite{LS05}, we show how we can design an algorithm that round the solution of $x^*$ into a random integral solution $x$, such that $\mathbb{E}[x]=x^*/\sqrt{n}$. 

\begin{lemma}[Lavi and Swamy \cite{LS05}]\label{swamy}
Let $x^*$ be the fractional optimal solution to \textsc{primal} and $\gamma$ be such that there exists a $\gamma$-approximation algorithm for the buyer to seller assignment problem and $\gamma$ also bounds the integrality gap of \textsc{primal}. Then, there exists an algorithm which we call \textsc{lottery-alg} that can be used to obtain, in polynomial time, a convex combination of integral solutions that is equal to $x^*/\gamma$, under the following conditions on the welfare generated by each seller $j$ and her matched buyers: a) it is a monotone function, b) it is $0$ for an empty set of buyers, and c) we have a polynomial time demand oracle for it.
\end{lemma}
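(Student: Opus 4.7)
The plan is to invoke the Lavi--Swamy \textsc{lottery-alg} essentially verbatim, instantiated on our \textsc{primal}. What must be certified is that the scaled fractional point $x^*/\gamma$ lies in the convex hull of integral feasible allocations, together with a polynomial-time procedure that produces an explicit decomposition. Concretely, I would search for nonnegative weights $\lambda_1,\lambda_2,\ldots$ on integral feasible assignments $\chi^{(1)},\chi^{(2)},\ldots$ satisfying $\sum_k \lambda_k=1$ and $\sum_k \lambda_k\,\chi^{(k)} = x^*/\gamma$. Writing this as a (gigantic) feasibility LP in the $\lambda_k$'s and taking its dual yields a polytope whose separation problem is, after simplification, a weighted version of the buyer-to-seller assignment problem: given dual weights, decide whether any integral allocation has welfare exceeding a threshold depending on $x^*$ and $\gamma$.

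The first key step is to show that this separation problem reduces to a per-seller query that the demand oracle of condition (c) can answer. Because the welfare contribution of each seller is monotone (a), vanishes on the empty set (b), and is accessible through a demand oracle (c), the separation call decomposes into $|M|$ independent maximization problems of exactly the form required by the oracle, and the side constraints of \textsc{primal} (unit demand of buyers, single allocation per seller, capacities) are absorbed into the $\gamma$-approximation hypothesis for the joint assignment problem. The second key step is then to plug this approximate separation oracle into the ellipsoid method: whenever it fails to return a violated cut, one concludes that $x^*/\gamma$ satisfies the scaled constraint, exactly mirroring the Lavi--Swamy argument. Ellipsoid terminates in polynomial time, and the polynomially many integral allocations it visited form the support of the desired convex combination, whose coefficients can be recovered by solving the restricted LP in these $\lambda_k$'s.

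The main obstacle, and the place where I would have to be careful rather than just quote \cite{LS05}, is verifying the reduction of the separation oracle to a per-seller demand query. In \textsc{primal}, variables $x_{jS}$ are tied together both by buyer-level matching constraints and by seller-level capacities $k_j$, and the dual multipliers appearing in the separation step can, in principle, have signs that would make a naive demand query meaningless. The conditions (a) and (b) are precisely what \cite{LS05} uses to reroute the separation call through a positive-weight demand computation; I would re-derive this rerouting for our specific \textsc{primal}, observing that the assignment-problem $\gamma$-approximation assumed in the statement of the lemma is strong enough to cover the combined multi-seller constraint set.

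Once these two structural checks are in place, the remainder of the argument is a direct citation of the ellipsoid-based LP decomposition of \cite{LS05}: the output lottery is feasible, it is a convex combination of integral allocations, and its expectation equals $x^*/\gamma$ coordinate-wise, which is exactly the conclusion of the lemma.
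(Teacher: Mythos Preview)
The paper does not prove this lemma at all: it is stated as a black-box result attributed to Lavi and Swamy \cite{LS05} and used without further argument. Your proposal is a reasonable outline of the original Lavi--Swamy decomposition technique (scale the fractional optimum by $1/\gamma$, show membership in the integral hull via the ellipsoid method with an approximate separation oracle derived from the $\gamma$-approximation algorithm, and read off the convex combination from the visited integral points), so in that sense it goes well beyond what the paper itself does.

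One caution on your sketch: the claim that the separation problem ``decomposes into $|M|$ independent maximization problems'' handled by per-seller demand oracles is not quite how the Lavi--Swamy reduction works, and you flag this yourself as the delicate step. In \cite{LS05} the separation oracle for the decomposition LP is the $\gamma$-approximation algorithm for the \emph{global} assignment problem, not a collection of per-seller queries; the per-seller demand oracle (condition (c)) is what is needed to solve \textsc{primal} and its dual in polynomial time, which is a prerequisite for obtaining $x^*$ in the first place. Conditions (a) and (b) are used to ensure the LP relaxation has the right structure (nonnegative coefficients, the empty allocation is feasible with value zero) so that the scaling-and-ellipsoid argument goes through. If you intend to actually write out the proof rather than cite it, you should keep these roles straight; otherwise, simply citing \cite{LS05} as the paper does is entirely adequate here.
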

Note that the utility function of the sellers in our setting does not satisfy the conditions presented in \ref{swamy}. Namely, the welfare of each seller $V_{j}(S)=\sum_{i\in S}v_{ij}-c_j(S)$ is not monotone. To fix this issue, we define another utility function for our sellers as follows, 
\[ \hat{V}_{j}(S)=\max_{S'\subseteq S}\sum_{i\in S'}v_{ij}-c_j(S'). \]

Now we have the following result.

\begin{theorem}\label{thm43}
For the case of super additive cost functions, using \textsc{lottery-alg} of Lemma \ref{swamy}, with $\hat{V}_{j}(S)$ as the welfare functions, we can in polynomial time, achieve a random integral assignment $x$ such that $x=x^*/\gamma$, with $\gamma=O(\sqrt{n})$.
\end{theorem}
\begin{proof}
We first need to argue that \textsc{lottery-alg} can use the $\hat{V}_j(S)$ functions, which means that each one of them: a) is monotone, b) is $0$ for the empty set, and c) has a polynomial time demand oracle. Monotonicity and $\hat{V}_j(\emptyset)=0$ follow clearly by the definition of the function. The demand oracle is simply an algorithm that maximizes $V_j(\cdot)$, which can be used to calculate $\hat{V}_j(.)$ and can also be turned directly into a separation oracle for the \textsc{dual}. If solve the problem for $\hat{V}(.)$, we can easily turn that solution into a solution for our original welfare functions. To see that this works, let's suppose, for the sake of contradiction, that some maximizer $S^*$ of $V_j(\cdot)$ is not maximizing $\hat{V}_j(\cdot)$. Then, there exists some $S'$ that maximizes $\hat{V}_j(\cdot)$ and is minimal, i.e., any subset of $S'$ does not maximize $\hat{V}_{j}(S)$. Then, by definition, we get $\hat{V}_j(S^*)=V_j(S^*)$ and $\hat{V}_j(S')=V_j(S')$. This leads to a contradiction since it must be the case that $V_j(S^*)\ge V_j(S')$ and $\hat{V}_j(S')>\hat{V}_j(S^*)$.

Next, we need to prove that the solution produced by \textsc{lottery-alg} will be $x^*/\gamma$. First note that a $\gamma=O(\sqrt{n})$ algorithm for this assignment problem is given in \cite{DNS05}. We do not provide the method used in their paper here to save space. However, the main observation is that since the cost functions are super additive, if we break a set of buyers into two disjoint subsets, and randomly assign one of them to the seller, the expected cost incurred by the seller decreases. Therefore, we can take the solution $x^*$ and break it into several integral assignments while potentially breaking some subsets into smaller sets and then we randomly use one of those integral assignments.

We now show that $x^*$ is a fractional optimal solution to \textsc{primal} even if we replace the objective function coefficients $V_j(S)$ with $\hat{V}_j(S)$ to get \textsc{primal-hat}. Suppose this was not true and let $x'$ be an optimal solution to \textsc{primal-hat} that achieves objective value $W(x')>W(x^*)$ (note $W(x^*)$ is the objective value both for \textsc{primal} and \textsc{primal-hat}). We may assume every positive $x'_{jS}$ is such that $V_j(S)=\hat{V}_j(S)$, since, if this is not true, we can easily transform $x'$ into this form (move all the value of $x'_{jS}$ to the corresponding variable for the minimal set that achieves the same $\hat{V}_j(S)$). Now the objective value $W(x')$ will be the same for both \textsc{primal} and \textsc{primal-hat}, which gives a contradiction to the optimality of $x^*$ for \textsc{primal}, since it would mean that $W(x')$, which is larger than $W(x^*)$, is the objective value for \textsc{primal} as well. This means we can use $x^*$ as the optimal fractional solution to give to \textsc{lottery-alg}, together with $\hat{V}_j(\cdot)$ functions, and compute a lottery as in Lemma \ref{swamy}.

Finally, as we argued before, every integral solution in the lottery can be converted into an integral solution for the $V_j(\cdot)$ functions with the same welfare. For each seller $j$ with assigned buyer set $S$, we can compute $S^*\in\arg\max_{S'\subseteq S}\sum_{i\in S'}v_{ij}-c_j(S')$ using our demand oracle. We then keep the players in $S^*$ assigned to the seller and remove the players in $S\setminus S^*$.
\qed
\end{proof}

\bibliographystyle{splncs04}
\bibliography{refs}

\appendix
\newpage
\section{Deferred Proofs}

\subsection{Proof of Proposition \ref{hardness}}
	We prove this by using a reduction from the set cover problem. Suppose we are given an instance of the set cover problem as follows:
	\begin{itemize}
		\item Universe $U=\{1, \ldots, n\}$
		\item Collection of subsets $\mathcal{S}=\{S_1, \ldots, S_m\}$ such that $S_1\cup S_2\cup \ldots \cup S_m=U$.
	\end{itemize} 
	We want to know what's the minimum number of subsets from $\mathcal{S}$ that we can use to cover all the elements in $U$. We reduce this problem to our two-sided market problem as follows. Suppose we have have $n$ buyers and $m$ sellers and the cost of serving any non-empty subset of buyers for any seller is exactly equal to one. We also assume for each buyer $i$, the value of being served by seller $j$ is one if and only if $i\in S_j$. In this setting, it's easy to see that the optimal value of gains from trade is exactly equal to $n-\text{OPT}$ where $\text{OPT}$ is the optimal solution to the set cover instance. This implies that we cannot design a polynomial time mechanism that is efficient for our setting even for the case when the cost functions are constant over non-empty subsets.

\subsection{Single Seller}
\label{sec-one}

We now begin discussing the results our mechanism yields in various settings. We begin with the case of one seller. In this case, there is always a subset of buyers with size at most $k_j$ that maximizes the produced welfare. This set $S$ maximizes the corresponding coefficient in the objective function (which is precisely the produced welfare) of the \textsc{primal} of Section \ref{sec-algs} and setting the corresponding $x_{jS}$ variable to $1$ is an integral optimal solution to the program. This suggests the integrality gap $\alpha$ is equal to $1$ in the statement of Theorem \ref{expo-time-thm}.

\begin{corollary}
For the case of a single seller, Mechanism \ref{expo-time-mech} is efficient, ex-ante weakly BB, DSIC, ex-ante IR, and ex-ante in the core.
\end{corollary}

We now argue that Mechanism 2 can be applied to the case of a single submodular uncapacitated seller. First of all, the argument of the previous paragraph remains: there is some welfare-maximizing subset of buyers and, similarly, the integrality gap will be $1$. Now we need to show that we can run \textsc{welfare-alg} and \textsc{core-alg} in polynomial time. With regard to \textsc{welfare-alg}, our problem involves minimizing a (non-monotone) submodular function, which is known to be solvable in polynomial time. This also gives us the integral optimal of \textsc{primal} we need in \textsc{core-alg}. All that remains is to show we can solve the \textsc{dual} of Section \ref{sec-algs} in polynomial time. This program has an exponential number of constraints, however, given a separation oracle, we can use the ellipsoid method to solve it in polynomial time. Checking the non-negativity constraints is trivial. We may rewrite the other constraint as
\[ c_j(S)-v_{ij}+\sum_{i\in S}y_i+z_j\ge 0. \]
With given $y_i,z_j$ variables, the left hand side is a submodular function, which as argued above can be minimized in polynomial time. If this minimum value is non-negative then all constraints of this form are satisfied. We then get the following corollary.
\begin{corollary}
For the case of a single seller and for arbitrarily small $\epsilon>0$, Mechanism 2 runs in time polynomial in $n,m$, and $1/\epsilon$, and is efficient, DSIC, and, with probability $1-\epsilon$, ex-ante weakly BB, $\epsilon$ ex-ante IR, and ex-ante in the $(1+\delta)$-core, where $\delta=2\epsilon/W$ and $W$ is the expected optimal welfare.
\end{corollary}

\subsection{Negative Gross Substitutes Cost Functions}
\label{sec-gs}

We now study the case with capacitated NGS sellers. If the functions $c_j(.)$ are NGS, then \textsc{primal} can be reduced to maximizing a function satisfying the GS property subject to allocation constraints. It is known that in this case, the optimal solution to the linear program will be integral. This implies that again $\alpha=1$. We then get the following corollary.

\begin{corollary}
For the case of capacitated NGS sellers, Mechanism \ref{expo-time-mech} is efficient, ex-ante weakly BB, DSIC, ex-ante IR, and ex-ante in the core.
\end{corollary}

The solution to \textsc{primal} also gives us our polynomial time \textsc{welfare-alg} and our integral solution to \textsc{primal} for \textsc{core-alg}. Then, for Mechanism 2, it remains to give a separation oracle for \textsc{dual}. Rewriting the first family of constraints as we did in Section \ref{sec-one}, we get an NGS function on the left hand side. As we observed, the function can be minimized in polynomial time (since it is equivalent to GS maximization). If the minimum is non-negative then all constraints of this form are satisfied. Checking the non-negativity constraints is trivial.

\begin{corollary}
For the case of capacitated NGS sellers and for arbitrarily small $\epsilon>0$, Mechanism 2 runs in time polynomial in $n,m$, and $1/\epsilon$, and is efficient, DSIC, and, with probability $1-\epsilon$, ex-ante WBB, $\epsilon$ ex-ante IR, and ex-ante in the $(1+\delta)$-core, where $\delta=2\epsilon/W$ and $W$ is the expected optimal welfare.
\end{corollary}

\end{document}